\documentclass{article}
\usepackage{spconf,amsmath,amssymb,amsthm,graphicx,booktabs,url,balance,capt-of}
\usepackage{algorithm}
\usepackage{algpseudocode}
\usepackage{color}
\usepackage{xcolor}
\usepackage{acronym}
\newtheorem{theorem}{Theorem}

\newcommand{\Ind}{\mathbb{I}}
\newcommand{\herm}{^{\mathsf{H}}}

\newcommand{\Prob}{\mathbb{P}}

\newcommand{\vecx}[1]{\bm{#1}}
\newcommand{\matx}[1]{\bm{#1}}
\newcommand{\setx}[1]{\mathcal{#1}}

\newtheorem{proposition}{Proposition}

\acrodef{sinr}[SINR]{signal-to-interference-plus-noise ratio}
\acrodef{csi}[CSI]{channel state information}
\acrodef{crc}[CRC]{conformal risk control}
\acrodef{cp}[CP]{Conformal prediction}

\title{Conformal Risk Control for Multiuser Beamforming}

\name{Aman Goel and Sundeep Prabhakar Chepuri}
\address{Indian Institute of Science, Bengaluru, India}
\def\cast{{
   \mathord{
      \hbox to 0em{
         \ooalign{
	   \smash{\hbox{$\ast$}}\crcr
	   \smash{\hskip-1pt\Large\hbox{$\circ$}} }
	 \hidewidth}
      \phantom{\bigcirc}
} }}
\def\bm#1{\mbox{\boldmath $#1$}}

\newcommand{\bds}{\begin {itemize}}
\newcommand{\eds}{\end {itemize}}
\newcommand{\bdf}{\begin{definition}}
\newcommand{\blm}{\begin{lemma}}
\newcommand{\edf}{\end{definition}}
\newcommand{\elm}{\end{lemma}}
\newcommand{\bthm}{\begin{theorem}}
\newcommand{\ethm}{\end{theorem}}
\newcommand{\bprp}{\begin{prop}}
\newcommand{\eprp}{\end{prop}}
\newcommand{\bcl}{\begin{claim}}
\newcommand{\ecl}{\end{claim}}
\newcommand{\bcr}{\begin{coro}}
\newcommand{\ecr}{\end{coro}}
\newcommand{\bquest}{\begin{question}}
\newcommand{\equest}{\end{question}}

\newcommand{\larrow}{{\larrow}}

\newcommand{\argmin}{\ensuremath{\mathrm{arg}\min}}
\newcommand{\argmax}{\ensuremath{\mathrm{arg}\max}}

\usepackage{latexsym}

\def\IC{\mathbb C}
\def\IN{\mathbb N}
\def\IZ{\mathbb Z}
\def\IR{\mathbb R}

\def\shat{^{\mathchoice{}{}%
 {\,\,\smash{\hbox{\lower4pt\hbox{$\widehat{\null}$}}}}%
 {\,\smash{\hbox{\lower3pt\hbox{$\hat{\null}$}}}}}}

\def\bSigma{{
      \ooalign{
      \smash{\hskip.4pt\raise.4pt\hbox{$\Sigma$}}\vphantom{}\crcr
      \smash{\hskip.7pt\raise.6pt\hbox{$\Sigma$}}\vphantom{}\crcr
      \smash{\hbox{$\Sigma$}}\vphantom{$\Sigma$}}
      \vphantom{\hbox{$\Sigma$}}
      }}
\def\bTheta{{
      \ooalign{
      \smash{\hskip.5pt\raise.5pt\hbox{$\Theta$}}\vphantom{}\crcr
      \smash{\hskip.0pt\raise.1pt\hbox{$\Theta$}}\vphantom{}\crcr
      \smash{\hbox{$\Theta$}}\vphantom{$\Theta$}}
      \vphantom{\hbox{$\Theta$}}
      }}
\def\bDelta{{
      \ooalign{
      \smash{\hskip.4pt\raise.4pt\hbox{$\Delta$}}\vphantom{}\crcr
      \smash{\hskip.7pt\raise.6pt\hbox{$\Delta$}}\vphantom{}\crcr
      \smash{\hbox{$\Delta$}}\vphantom{$\Delta$}}
      \vphantom{\hbox{$\Delta$}}
      }}
\def\bLambda{{
      \ooalign{
      \smash{\hskip.5pt\raise.5pt\hbox{$\Lambda$}}\vphantom{}\crcr
      \smash{\hskip.0pt\raise.1pt\hbox{$\Lambda$}}\vphantom{}\crcr
      \smash{\hbox{$\Lambda$}}\vphantom{$\Lambda$}}
      \vphantom{\hbox{$\Lambda$}}
      }}

\makeatletter

\def\bordermatrix#1{\begingroup \m@th
  \@tempdima 8.75\p@
  \setbox\z@\vbox{%
    \def\cr{\crcr\noalign{\kern2\p@\global\let\cr\endline}}%
    \ialign{$##$\hfil\kern2\p@\kern\@tempdima&\thinspace\hfil$##$\hfil
      &&\quad\hfil$##$\hfil\crcr
      \omit\strut\hfil\crcr\noalign{\kern-\baselineskip}%
      #1\crcr\omit\strut\cr}}%
  \setbox\tw@\vbox{\unvcopy\z@\global\setbox\@ne\lastbox}%
  \setbox\tw@\hbox{\unhbox\@ne\unskip\global\setbox\@ne\lastbox}%
  \setbox\tw@\hbox{$\kern\wd\@ne\kern-\@tempdima\left[\kern-\wd\@ne
    \global\setbox\@ne\vbox{\box\@ne\kern2\p@}%
    \vcenter{\kern-\ht\@ne\unvbox\z@\kern-\baselineskip}\,\right]$}%
  \null\;\vbox{\kern\ht\@ne\box\tw@}\endgroup}
\makeatother

\makeatletter
\def\argmin{\mathop{\operator@font arg\,min}}
\def\argmax{\mathop{\operator@font arg\,max}}
\makeatother

\def\bm#1{\mbox{\boldmath $#1$}}

\newcommand{\bea}{\begin{array}}
\newcommand{\ena}{\end{array}}
\newcommand{\beq}{\begin{equation}}
\newcommand{\enq}{\end{equation}}

\newcommand{\beqa}{\begin{eqnarray}}
\newcommand{\enqa}{\end{eqnarray}}

\newcommand{\beqan}{\begin{eqnarray*}}
\newcommand{\enqan}{\end{eqnarray*}}

\newcommand{\AL}{\begin{enumerate}}
\newcommand{\ALE}{\end{enumerate}}

\def\addots{\mathinner{
    \mkern1mu\raise0pt\vbox{\kern7pt\hbox{.}}
    \mkern2mu\raise4pt\hbox{.}
    \mkern2mu\raise7pt\hbox{.}
    \mkern1mu}}

\def\sddots{\mathinner{
    \mkern.8mu\raise7pt\hbox{.}
    \mkern.8mu\raise4pt\hbox{.}
    \mkern.8mu\raise0pt\vbox{\kern7pt\hbox{.}}
    \mkern1mu}}

\def\saddots{\mathinner{
    \mkern.2mu\raise0pt\vbox{\kern7pt\hbox{.}}
    \mkern.2mu\raise4pt\hbox{.}
    \mkern.2mu\raise7pt\hbox{.}
    \mkern1mu}}

\newcommand{\bh}{{\ensuremath{\mathbf{h}}}}

\def\sqplus{\mathbin{
	{\ooalign{\hfil\raise.3ex\hbox{\scriptsize
	+}\hfil\crcr\mathhexbox274\crcr\mathhexbox275}}
	}} 
\def\sqminus{\mathbin{
	{\ooalign{\hfil\raise.3ex\hbox{\scriptsize
	--}\hfil\crcr\mathhexbox274\crcr\mathhexbox275}}
	}}

\def\IC{{
   \mathord{
      \hbox to 0em{
	 \hskip-4pt
         \ooalign{
	   \smash{\hskip1.9pt\raise2.6pt\hbox{$\scriptscriptstyle |$}}\crcr
	   \smash{\hbox{\rm\sf C}} }
	 \hidewidth}
      \phantom{\hbox{\rm\sf C}}
} }}
\def\IN{
    {\ooalign{
   \smash{\hskip2.2pt\raise1.5pt\hbox{$\scriptscriptstyle |$}}\vphantom{}\crcr
   \hbox{\sf N}
	}}
	} 
\def\IZ{
    {\ooalign{
   \smash{\hskip1.9pt\raise0pt\hbox{$\sf Z$}}\vphantom{}\crcr
   \hbox{\sf Z}
	}}
	} 
\def\IR{
    {\ooalign{
   \smash{\hskip2.2pt\raise1.5pt\hbox{$\scriptscriptstyle |$}}\vphantom{}\crcr
   \smash{\hskip2.2pt\raise3.3pt\hbox{$\scriptscriptstyle |$}}\vphantom{}\crcr
   \hbox{\sf R}
	}}
	} 

\DeclareMathAlphabet{\mathcmb}{OT1}{cmr}{b}{n}

\def\bSigma{\ensuremath{\mathcmb{\Sigma}}}
\def\bLambda{\ensuremath{\mathcmb{\Lambda}}}

\def\bTheta{\ensuremath{\mathcmb{\Theta}}}

\newcommand{\SI}{\begin{indlist}}
\newcommand{\EI}{\end{indlist}}

\newcommand{\DL}{\begin{dashlist}}
\newcommand{\DLE}{\end{dashlist}}

\makeatletter
\def\setboxz@h{\setbox\z@\hbox}
\def\wdz@{\wd\z@}
\def\boxz@{\box\z@}
\def\underset#1#2{\binrel@{#2}%
  \binrel@@{\mathop{\kern\z@#2}\limits_{#1}}}
\def\binrel@#1{\begingroup
  \setboxz@h{\thinmuskip0mu
    \medmuskip\m@ne mu\thickmuskip\@ne mu
    \setbox\tw@\hbox{$#1\m@th$}\kern-\wd\tw@
    ${}#1{}\m@th$}%
  \edef\@tempa{\endgroup\let\noexpand\binrel@@
    \ifdim\wdz@<\z@ \mathbin
    \else\ifdim\wdz@>\z@ \mathrel
    \else \relax\fi\fi}%
  \@tempa
}
\let\binrel@@\relax%
\makeatother

\begin{document}
\ninept
\maketitle

\begin{abstract}
We consider multiuser downlink beamforming with prescribed
signal-to-interference-plus-noise ratio (SINR) requirements under imperfect channel state information and a tight transmit-power budget. Robust worst-case designs often require tractable channel-error models and may not yield feasible beamformers that meet the power budget. To alleviate these issues, we propose conformal risk control for beamforming (CRCBeam), which combines the classical fixed-point minimum-power beamforming with conformal risk control to calibrate an SINR margin from calibration sample pairs of estimated and true channels. Under exchangeability, CRCBeam provides finite-sample control over served-outage risk, which is the probability of transmitting when at least one user fails to meet its SINR target. CRCBeam requires no parametric channel-error model and transmits only when the power budget is met. Numerical experiments using a pilot contamination channel error model corroborate that CRCBeam achieves the served-outage risk while meeting the power budget. In contrast, worst-case robust baselines, Gaussian-ball and conformal-ball beamforming, either do not meet the target risk or exceed the power budget.
\end{abstract}

\begin{keywords}
Beamforming, conformal risk control, conformal prediction, robust beamforming, uncertainty quantification.
\end{keywords}
\vspace{-2mm}
\section{Introduction}~\label{sec:intro}
Multiuser downlink beamforming seeks to meet users' \ac{sinr}
requirements with minimum transmit power. With perfect \ac{csi},
this classical problem admits efficient fixed-point algorithms~\cite{bjornson2014}. In practice, the base
station has imperfect channel estimates, so satisfying
estimated-channel SINR constraints does not ensure that users'
quality-of-service requirements are met on the true channels.
Moreover, the joint distribution of true and estimated channels can
be difficult to characterize, particularly with learned channel
estimators, hardware impairments, or intermittent pilot contamination.
This motivates calibrating reliability directly from data without a parametric CSI-error model. Practical operation also imposes a hard transmit power budget, i.e., a candidate beamformer is deemed as deployable only if its required power does not exceed this budget.

Worst-case robust methods enforce SINR requirements over a prescribed
range of possible channels~\cite{vorobyov2005robust}, while model-based
chance-constrained methods limit outage under an assumed channel-error
distribution~\cite{vorobyov2005robustchance}. Their reliability depends on how
accurately these descriptions represent actual CSI errors.
\ac{cp} is a popular machine learning technique that offers a data-driven alternative by constructing channel sets
with prescribed marginal coverage under
exchangeability~\cite{angelopoulos2021gentle}.
Such sets have been combined with robust beamforming to optimize a
single user's rate under a power constraint~\cite{su2025conformal}.
For prescribed multiuser SINR targets, however, satisfying the
requirements for every channel in a calibrated set can demand more
power than the permitted budget, thereby reducing transmission availability.

We propose \emph{conformal risk control for beamforming} (CRCBeam),
which applies \ac{crc}~\cite{angelopoulos2024crc} to control the probability of transmitting while at
least one user fails its SINR target, which we call  as \emph{served-outage risk}. Using pairs of estimated and true
channels, it calibrates a scalar SINR margin for nominal minimum-power
beamforming without constructing a channel uncertainty set.
The resulting policy provides finite-sample marginal risk control
under exchangeability while enforcing the power budget.
Our contributions are as follows.
\begin{itemize}\itemsep2pt
\item We develop CRCBeam that uses a fixed-point beamforming solver and provides served-outage control.
\item We establish finite-sample marginal served-outage control using
a monotone upper envelope of the proposed risk function. We also show that the selected regular
margin has the lowest optimal required power among all CRC-certified
regular margins on the candidate grid.
\end{itemize}
We conduct experiments using an intermittent pilot contamination model to synthesize channel errors and evaluate served-outage and transmit power. CRCBeam achieves low served-outage and meets the transmit power budget, whereas robust baselines that construct uncertainty sets based on a Gaussian assumption or CP either fail to meet the target risk or exceed the power budget.

\vspace{-2mm}
\section{Problem Modelling}
\label{sec:problem}

\noindent\textbf{System model.}
Consider a downlink system with an $M$-antenna base station serving
$K$ single antenna users. Let $s_k\in\mathbb{C}$ denote the data symbol
of user $k$, and let $\vecx h_k\in\mathbb{C}^{M}$ denote its channel
vector. The symbols satisfy $\mathbb E[s_j s_k^*]=0$ for $j\neq k$
and $\mathbb E[|s_k|^2]=1$. They are linearly precoded using
$\vecx w_1,\ldots,\vecx w_K\in\mathbb{C}^{M}$. The received signal is
$r_k=\vecx h_k\herm\bigl(\sum_{j=1}^{K}\vecx w_j s_j\bigr)+n_k$,
where $n_k$ is zero-mean receiver noise with variance $\sigma^2>0$,
independent of the symbols. Writing
$\matx W=[\vecx w_1,\ldots,\vecx w_K]\in\mathbb{C}^{M\times K}$,
the required transmit power is
$\|\matx W\|_F^2=\sum_{k=1}^{K}\|\vecx w_k\|_2^2$.
The \ac{sinr} at user $k$ is
\begin{equation}
\mathrm{SINR}_k(\matx W,\vecx h_k)
=
\frac{|\vecx h_k\herm\vecx w_k|^2}
{\displaystyle\sum_{j\neq k}|\vecx h_k\herm\vecx w_j|^2+\sigma^2}.
\label{eq:sinr}
\end{equation}
Under perfect \ac{csi}, the classical minimum-power beamformimg problem
is~\cite{bjornson2014}
\begin{align}
\underset{\matx W}{\operatorname{minimize}}
\quad & \|\matx W\|_F^2
\tag{P1a}\label{prob:p1a}\\
\text{subject to}
\quad & \mathrm{SINR}_k(\matx W,\vecx h_k)\geq\gamma_k,
\quad k=1,\ldots,K,
\tag{P1b}\label{prob:p1b}
\end{align}
where $\gamma_k>0$ is user $k$'s SINR target. A candidate is
\emph{deployable} only if $\|\matx W\|_F^2\leq P_{\max}$, where
$P_{\max}>0$ is the power budget. Over-budget candidates are not
transmitted, and no user is served in that transmission opportunity.

In practice, only imperfect channel estimates
$\{\widehat{\vecx h}_k\}_{k=1}^{K}$ are available. Let
$\matx H=[\vecx h_1,\ldots,\vecx h_K]$ and
$\widehat{\matx H}=[\widehat{\vecx h}_1,\ldots,\widehat{\vecx h}_K]$.
Let $Q$ denote the unknown joint distribution of
$(\matx H,\widehat{\matx H})$. We do not assume a parametric form
for this distribution. A nominal design substitutes
$\widehat{\vecx h}_k$ for $\vecx h_k$ in \eqref{prob:p1b}.
However, $\mathrm{SINR}_k(\matx W,\widehat{\vecx h}_k)\geq\gamma_k$
does not imply $\mathrm{SINR}_k(\matx W,\vecx h_k)\geq\gamma_k$.
Nominal feasibility therefore does not ensure true-channel reliability.

\noindent\textbf{Problem statement.}
This paper develops a conformal risk control (CRC) framework for
beamforming with imperfect \ac{csi} and power-capped deployment.
Our objective is to minimize average required transmit power while
controlling the probability that a candidate is deployed and at least
one user fails to meet its SINR target. We call this event a
\emph{served outage}. No transmission outcomes are excluded from this event, and availability is assessed separately.

Let $\matx W(\widehat{\matx H})$ denote the candidate beamformer computed from the available channel estimate. The candidate may be produced by an optimization-based or learned neural beamforming
method. Define its availability
indicator and joint outage loss, respectively, as
\begin{equation}
p(\matx W)=\Ind\!\left\{\|\matx W\|_F^2\leq P_{\max}\right\},
\label{eq:deployment-indicator}
\end{equation}
\begin{equation}
\ell(\matx W,\matx H)
=
\Ind\!\left\{
\exists\,k\in\{1,\ldots,K\}:\,
\mathrm{SINR}_k(\matx W,\vecx h_k)<\gamma_k
\right\},
\label{eq:joint-outage-loss}
\end{equation}
where $\Ind\{\cdot\}$ denotes the indicator function.
The joint served-outage loss is
\begin{equation}
\setx S(\matx W,\matx H)=p(\matx W)\ell(\matx W,\matx H).
\label{eq:served-outage-loss}
\end{equation}
It equals one only when a transmission occurs and at least one user
is in outage. It equals zero for successful transmission
($p=1$, $\ell=0$) or no transmission ($p=0$). Zero served-outage loss therefore does not necessarily indicate successful service.

For a prescribed risk level $\alpha\in(0,1)$, the proposed population design problem\footnote{Unless stated otherwise, probabilities are taken with respect to the joint distribution $Q$ of $(\widehat{\matx H},\matx H)$ and expectation
subscripts indicate the variables being averaged over with
$\mathbb E_{\widehat{\matx H}}$ using the corresponding marginal
distribution and $\mathbb E_{\widehat{\matx H},\matx H}$ using the
joint distribution.} is
\begin{equation}
\begin{aligned}
\underset{\matx W(\cdot)}{\operatorname{minimize}}
\quad&
\mathbb E_{{\widehat{\matx H}}}
\left[\|\matx W(\widehat{\matx H})\|_F^2\right]
\\
\text{subject to}\quad&
\mathbb E_{{\widehat{\matx H},\matx H}}
\left[
\setx S\bigl(\matx W(\widehat{\matx H}),\matx H\bigr)
\right]\leq\alpha.
\end{aligned}
\tag{P2}
\label{eq:beamforming-problem}
\end{equation}
The optimization is over beamforming rules $\matx W(\cdot)$ based on
$\widehat{\matx H}$ with finite average required power. The objective
measures candidate power, including over-budget candidates, rather
than actual power transmitted after applying the budget check.
The risk constraint is equivalent to
\begin{equation}
\Prob\!\left\{
\begin{aligned}
&\|\matx W(\widehat{\matx H})\|_F^2\leq P_{\max},\\
&\exists\,k:\,
\mathrm{SINR}_k\bigl(\matx W(\widehat{\matx H}),\vecx h_k\bigr)
<\gamma_k
\end{aligned}
\right\}\leq\alpha.
\label{eq:served_outage_probability}
\end{equation}
This is a marginal served-outage constraint, which limits the probability of an actual transmission failing at least one user's SINR requirement, while accommodating no transmission decisions imposed by the power budget. We develop a CRC-based approach for enforcing the served-outage
constraint in \textup{(P2)} using finite calibration data.
Define the deployment probability as
$p_{\mathrm{dep}}
=\mathbb E_{\widehat{\matx H}}
[p(\matx W(\widehat{\matx H}))]$.
Any design satisfying \textup{(P2)} serves all users successfully
with probability
$p_{\mathrm{dep}}-\mathbb E_{\widehat{\matx H},\matx H}[\setx S]
\geq p_{\mathrm{dep}}-\alpha$.
Thus, availability and served-outage risk jointly characterize the probability of successful service.

\vspace{-2mm}
\section{Prior Art and Background}
\label{sec:prelim}
We review how existing approaches account for channel uncertainty
and motivate direct calibration of the served-outage loss.
\\

\noindent\textbf{Robust Beamforming under Imperfect CSI.}
Worst-case designs prescribe sets such as
$\mathcal H_k(\widehat{\vecx h}_k)
=\{\vecx h:\|\vecx h-\widehat{\vecx h}_k\|_2\leq\varepsilon_k\}$,
where $\varepsilon_k>0$, and replace \eqref{prob:p1b}
with~\cite{vorobyov2005robust}
\begin{equation}
\mathrm{SINR}_k(\matx W,\vecx h)\geq\gamma_k,
\quad\forall\vecx h\in\mathcal H_k(\widehat{\vecx h}_k).
\label{eq:robust_constraint}
\end{equation}
The so-called S-procedure reformulates these quadratic uncertainty
constraints~\cite{boyd2004convex}, but does not validate the chosen
sets. Small sets miss relevant errors, whereas large sets can require
excessive power. Model-based chance-constrained designs instead
assume a conditional channel law, often Gaussian{~\cite{vorobyov2005robustchance}.
Their outage bounds need not hold when this law differs from
$P_{\matx H\mid\widehat{\matx H}}$. These approaches require a
justified uncertainty set or probabilistic model, respectively.
\\

\noindent\textbf{Conformal prediction (CP).}
CP calibrates coverage without a parametric distributional model, under exchangeability of calibration and test pairs~\cite{angelopoulos2021gentle}. Let
$Z_i=(\widehat{\matx H}_i,\matx H_i)$ for $i=1,\cdots,N_c$ be $N_c$ calibration pairs
and a test pair $Z_{N_c+1}$. Assume exchangeability conditional on
separate training data, i.e., their joint distribution is invariant to reordering. For a non-conformity score $s_k$ and coverage level $\alpha_k \in (0,1)$ fixed before calibration, let $\widehat q_k$ be the
$\lceil(N_c+1)(1-\alpha_k)\rceil$-th smallest element of the $N_c$
calibration scores augmented by $+\infty$, retaining multiplicities. Then
\begin{equation}
\mathcal H_k^{\mathrm{CP}}(\widehat{\vecx h}_k)
=\{\vecx h:s_k(\widehat{\vecx h}_k,\vecx h)\leq\widehat q_k\}.
\label{eq:cp_set}
\end{equation}
Let $C_k = \{\vecx h_k \in \mathcal{H}_k^{\mathrm{CP}}(\widehat{\bh}_k)\}$ denote the coverage event for user $k$.
Under exchangeability of calibration and test pairs and 
$\sum_k\alpha_k\leq\alpha$, CP guarantees 
\begin{equation}
\begin{aligned}
\Prob(C_k)&\geq1-\alpha_k,\quad 
\Prob\left(\bigcap_{k=1}^{K}C_k\right)
\geq1-\sum_{k=1}^{K}\alpha_k\geq1-\alpha.
\end{aligned}
\label{eq:cp_coverage}
\end{equation}
CP-robust baselines use these sets in
\eqref{eq:robust_constraint}~\cite{su2025conformal}. The guarantee in \eqref{eq:cp_coverage}, however, is only a
\emph{channel-coverage} guarantee. If robust-feasible and within-budget candidates are deployed, this also guarantees joint served-outage risk at most $\alpha$. However, since CP protects every included channel, it can be very conservative. That is, channel miscoverage need not necessarily cause an SINR outage, while larger sets can increase power and reduce availability. 
\\

\noindent\textbf{Conformal risk control (CRC).}
Conformal risk control (CRC) extends the idea of CP from coverage of a prediction set to control of a general bounded risk function~\cite{angelopoulos2024crc}.
Fix a decision family and a finite grid
$\lambda_1<\cdots<\lambda_{J+1}$ before calibration, and let
$L_i(\lambda)=L(\lambda;Z_i)\in[0,1]$.
Suppose the calibration and test losses are exchangeable,
non-increasing in the grid index, and satisfy
$L_i(\lambda_{J+1})=0$ almost surely. Classical CRC selects
\begin{equation}
\widehat j=\min\left\{
 j:\,
\frac{1+\sum_{i=1}^{N_c}L_i(\lambda_j)}{N_c+1}\leq\alpha
\right\},
\label{eq:crc_background}
\end{equation}
over $j\in\{1,\ldots,J+1\}$, using $\widehat j=J+1$ if the set is
empty. It guarantees
$\mathbb E[L_{N_c+1}(\lambda_{\widehat j})]\leq\alpha$, averaging over
calibration and test data. CP controls set miscoverage loss, whereas CRC controls downstream decision losses. This distinction is particularly useful for the beamforming problem of interest. In essence, rather than first calibrating a channel uncertainty set and then protecting against every channel in that set, we can calibrate directly with respect to SINR failure on the true channels as discussed next.

%%%%%%%%%%%%%%%%%%%%%%%%%
%Proposed section
%%%%%%%%%%%%%%%%%%%%%%%%%%
\vspace*{-2mm}
\section{The Proposed {CRCBeam} Method}
\label{sec:method}

The proposed method addresses \eqref{eq:beamforming-problem} using finite calibration data to control the joint served-outage risk and select the minimum-power solution among CRC-certified candidates within a family of nominal SINR-margin beamformers.
In particular, it selects a scalar SINR margin and
applies the same power-budget check during calibration and deployment,
without specifying a channel-error distribution or uncertainty set.
Let $Z_i=(\widehat{\matx H}_i,\matx H_i)$ contain the estimated and
true channels of all users. The $N_c$ calibration pairs and a test
pair are assumed exchangeable, conditional on any separate training
data. 
\\

\noindent\textbf{Margin Calibration of CRCBeam.}
%\label{subsec:direct_crc}
Fix nonnegative ordered margins
$0=\lambda_1<\cdots<\lambda_J$.
For estimates $\widehat{\matx H}$ and margin $\lambda_j$, solve
\begin{equation}
\begin{aligned}
&\underset{\matx W}{\operatorname{minimize}}
\quad\|\matx W\|_F^2\\
&\text{s. to}\quad
\mathrm{SINR}_k(\matx W,\widehat{\vecx h}_k)
\geq\gamma_k(1+\lambda_j),\quad  k=1,\ldots,K.
\end{aligned}
\label{eq:margin_bf}
\end{equation}
This preserves the estimated CSI minimum-power formulation, i.e.,
$\lambda_1=0$ gives the nominal design, and larger margins tighten its SINR targets. 

Let $\matx W_i(\lambda_j)$ be the candidate returned for calibration sample~$i$,
and set $P_i(\lambda_j)=\|\matx W_i(\lambda_j)\|_F^2$.
If no valid candidate is returned, set $P_i(\lambda_j)=+\infty$.
For a valid candidate, define
$\ell_i(\lambda_j)=\ell(\matx W_i(\lambda_j),\matx H_i)$ using
\eqref{eq:joint-outage-loss}, evaluated at the original targets~$\gamma_k$; otherwise set $\ell_i(\lambda_j)=0$ by convention.
The availability and served-outage indicators, respectively, are
\begin{equation*}
p_i(\lambda_j)=\Ind\!\left\{P_i(\lambda_j)\leq P_{\max}\right\} \quad \text{and} \quad 
\setx S_i(\lambda_j)=p_i(\lambda_j)\ell_i(\lambda_j).
\label{eq:served_outage_loss}
\end{equation*}
Thus, unavailable candidates contribute zero served-outage loss and remain in the calibration sample count.

Increasing the margin tightens the estimated-channel SINR constraints, but need not improve SINR on the true channels. Hence,
$\setx S_i(\lambda_j)$ need not be non-increasing in $j$. We therefore construct a non-increasing upper envelope for CRC~\cite{angelopoulos2024crc}.
Append a no-transmission action $d$,
with $p_i(d)=\setx S_i(d)=0$, and write
$a_j=\lambda_j$ for $j\leq J$ and $a_{J+1}=d$. Define
\begin{equation}
\widetilde{\setx S}_i(a_j)
=\max_{m\in\{j,\ldots,J+1\}}\setx S_i(a_m).
\label{eq:monotone_envelope}
\end{equation}
The envelope is non-increasing in $j$, upper-bounds $\setx S_i$, and has terminal value zero. Define
\begin{equation}
\mathcal J_{\mathrm{CRC}}
=\left\{j:\,
\frac{1+\sum_{i=1}^{N_c}\widetilde{\setx S}_i(a_j)}{N_c+1}
\leq\alpha\right\},
\label{eq:crc_accepted_indices}
\end{equation}
where $j\in\{1,\ldots,J+1\}$. Then CRC selects
\begin{equation}
\widehat j=
\begin{cases}
\min\mathcal J_{\mathrm{CRC}},&\mathcal J_{\mathrm{CRC}}\neq\varnothing,\\
J+1,&\mathcal J_{\mathrm{CRC}}=\varnothing.
\end{cases}
\label{eq:crc_margin_selection}
\end{equation}
This returns the smallest regular margin meeting the CRC criterion
whenever one exists, and otherwise returns $d$. The explicit fallback
also covers risk levels below $1/(N_c+1)$.

\noindent\textbf{Power-Constrained Deployment.} If $\widehat j\leq J$, fix $\widehat\lambda=\lambda_{\widehat j}$
and solve \eqref{eq:margin_bf} for each new channel estimate. Deploy only a valid candidate with required power at most $P_{\max}$;
otherwise, do not transmit. If $d$ is selected, no transmission occurs.
Using the same solver, selected margin, and acceptance checks preserves
the calibrated decision rule. The next section establishes its
finite-sample marginal served-outage guarantee and the least-power
property among CRC-certified regular margins when the candidate
problems are solved globally.
\\

\noindent\textbf{Theoretical Guarantees.} The proposed method uses finite calibration data to control the joint
served-outage risk in \eqref{eq:beamforming-problem}.
Theorem~\ref{thm:crc} establishes this guarantee for a future deployment sample. Proposition~\ref{thm:least_power} shows that the selected regular
margin requires the least transmit power among all regular margins
satisfying the CRC criterion, provided each candidate beamforming
problem is solved to global optimality.

\begin{theorem}[Finite-sample joint served-outage control]
\label{thm:crc}
Let the $N_c$ calibration samples and a future deployment sample be exchangeable. Fix the margin grid and a measurable candidate generation
and deployment rule before calibration, and apply the same rule to every sample. Let $a_{\widehat j}$ be selected by~\eqref{eq:crc_margin_selection}, including its no-transmission fallback $d$. Then, for any $\alpha\in(0,1)$
\begin{equation}
\mathbb E\!\left[
\setx S_{N_c+1}(a_{\widehat j})
\right]\leq\alpha,
\label{eq:crc_guarantee_main}
\end{equation}
where the expectation is over both the calibration samples and the future deployment sample.
\end{theorem}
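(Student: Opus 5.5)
The plan is to reduce Theorem~\ref{thm:crc} to the standard CRC argument applied to the monotone envelope $\widetilde{\setx S}_i$, and then pass from the envelope back to $\setx S$ using the pointwise bound $\setx S_i(a_j)\le\widetilde{\setx S}_i(a_j)$. First I will check that, for every $i\in\{1,\ldots,N_c+1\}$, the map $j\mapsto \widetilde{\setx S}_i(a_j)$ takes values in $[0,1]$ and is non-increasing, with $\widetilde{\setx S}_i(a_{J+1})=0$. Each of these follows directly from \eqref{eq:monotone_envelope} and $\setx S_i(d)=0$. Because the candidate generation, the solver, the budget check and the envelope are a fixed measurable function of $Z_i$ applied identically to every sample, the vectors $(\widetilde{\setx S}_i(a_1),\ldots,\widetilde{\setx S}_i(a_{J+1}))$, $i=1,\ldots,N_c+1$, are exchangeable.

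Next I will introduce the oracle index
\begin{equation*}
j^\ast=\min\Bigl\{j:\ \tfrac{1}{N_c+1}\textstyle\sum_{i=1}^{N_c+1}\widetilde{\setx S}_i(a_j)\le\alpha\Bigr\},
\end{equation*}
which is well defined because $j=J+1$ always qualifies. The index $j^\ast$ is a symmetric function of $Z_1,\ldots,Z_{N_c+1}$. Since $\widetilde{\setx S}_{N_c+1}\le 1$, we have $\sum_{i\le N_c+1}\widetilde{\setx S}_i(a_j)\le 1+\sum_{i\le N_c}\widetilde{\setx S}_i(a_j)$. Hence every $j\in\mathcal J_{\mathrm{CRC}}$ also satisfies the oracle condition, which gives $j^\ast\le \widehat j$ whenever $\mathcal J_{\mathrm{CRC}}\neq\varnothing$. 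In the fallback case, $\widehat j=J+1\ge j^\ast$ holds trivially. Monotonicity of the envelope then yields $\widetilde{\setx S}_{N_c+1}(a_{\widehat j})\le \widetilde{\setx S}_{N_c+1}(a_{j^\ast})$.

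Exchangeability combined with the symmetry of $j^\ast$ gives
\begin{equation*}
\mathbb E\bigl[\widetilde{\setx S}_{N_c+1}(a_{j^\ast})\bigr]=\mathbb E\Bigl[\tfrac{1}{N_c+1}\textstyle\sum_{i=1}^{N_c+1}\widetilde{\setx S}_i(a_{j^\ast})\Bigr]\le\alpha,
\end{equation*}
where the final inequality is the defining property of $j^\ast$. Chaining this with $\setx S_{N_c+1}(a_{\widehat j})\le\widetilde{\setx S}_{N_c+1}(a_{\widehat j})$ yields \eqref{eq:crc_guarantee_main}.

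The step I expect to need the most care is justifying that the deployment-time loss $\setx S_{N_c+1}(a_{\widehat j})$ is exactly the calibration-type loss evaluated at the test sample. This requires that the deployment rule coincide with the calibration rule: the same solver, the same validity convention, and the same budget check, with $d$ meaning no transmission. It also requires that the envelope for the test sample be defined from its full loss vector even though that vector is never observed. The envelope is only an analytical device, so measurability and identical application of the rule across samples are all that is needed. I will state this explicitly, together with the point that ties or non-unique solver outputs are resolved by a fixed measurable rule, so that exchangeability of the loss vectors is inherited from exchangeability of the $Z_i$.
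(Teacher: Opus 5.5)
Your proof is correct and follows the paper's route. You dominate $\setx S$ by the non-increasing envelope $\widetilde{\setx S}$ with zero terminal value, obtain the CRC bound for the envelope, and pass back via $\setx S_{N_c+1}(a_{\widehat j})\le\widetilde{\setx S}_{N_c+1}(a_{\widehat j})$. The only difference is that you prove the CRC step inline through the symmetric oracle index $j^\ast$ rather than citing it, and because $j^\ast$ is always well defined this also covers $\alpha<1/(N_c+1)$ without the paper's separate fallback case.
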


\noindent\emph{Proof.}
By \eqref{eq:monotone_envelope},
$0\leq\setx S_i(a_j)\leq\widetilde{\setx S}_i(a_j)\leq1$, and the envelope is non-increasing in $j$ and satisfies
$\widetilde{\setx S}_i(d)=0$.
For $\alpha\geq1/(N_c+1)$, the terminal action satisfies the CRC
criterion. The finite-grid CRC guarantee for bounded monotone
losses~\cite{angelopoulos2024crc} therefore gives
\[
\mathbb E\!\left[
\widetilde{\setx S}_{N_c+1}(a_{\widehat j})
\right]\leq\alpha.
\]
Since $\mathcal{S}_{N_c+1}(a_{\widehat\jmath})\leq\widetilde{\mathcal{S}}_{N_c+1}(a_{\widehat\jmath})$, we obtain (14).
For $\alpha<1/(N_c+1)$, the acceptance set is empty, so the selection
rule returns $d$, whose served-outage loss is identically zero.
\hfill$\square$

\begin{figure*}[t]
    \centering
\includegraphics[width=0.9\linewidth]{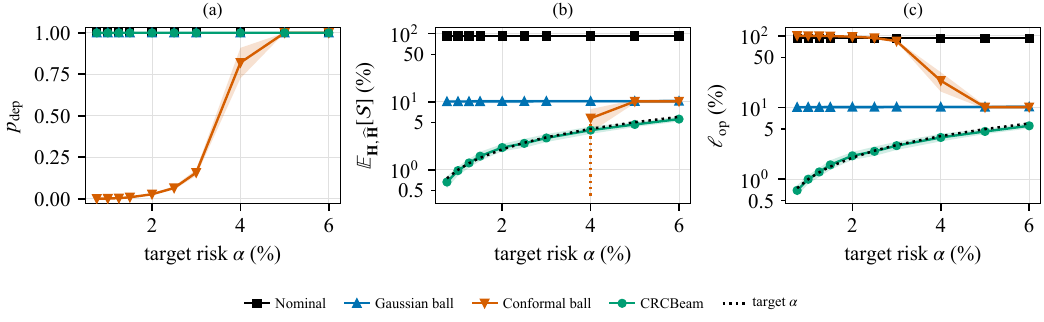}
    \caption{\footnotesize Performance versus $\alpha$ at $p_c=4\%$: (a) availability
    $p_{\rm dep}$; (b) joint served-outage $E_{\matx H, \widehat{\matx H}}[\setx S]$, with reference
    $E_{\matx H, \widehat{\matx H}}[\setx S]=\alpha$; (c) operational loss
    $\ell_{\rm op}$.
    Curves are means over $20$ replicates; bands are $95\%$ bootstrap
    intervals obtained by resampling replicates.}
    \label{fig:main}
    \vspace{-4mm}
\end{figure*}
If a beamforming mapping is learned from a separate training dataset
$\mathcal D_{\rm train}$, e.g., using a neural network, the learned
mapping is held fixed during calibration and deployment. If the calibration and deployment samples are exchangeable conditional
on $\mathcal D_{\rm train}$, the preceding proof yields
$\mathbb E[\setx S_{N_c+1}(a_{\widehat j})
\mid \mathcal D_{\rm train}]\leq\alpha$
almost surely, where the expectation averages over calibration and
deployment samples. Averaging over $\mathcal D_{\rm train}$ gives \eqref{eq:crc_guarantee_main}.

\begin{proposition}[Least-power CRC-certified margin]
\label{thm:least_power}
For fixed channel estimates $\widehat{\matx H}$, let
$P^\star(\lambda_j;\widehat{\matx H})$ denote the optimal value of
\eqref{eq:margin_bf}, with value $+\infty$ if the problem is
infeasible. This optimal value is non-decreasing in $j$.
Consequently, whenever $\widehat j\leq J$, for every regular margin
$\lambda_j$ satisfying the CRC criterion
\begin{equation}
P^\star(\lambda_{\widehat j};\widehat{\matx H})
\leq P^\star(\lambda_j;\widehat{\matx H}).
\label{eq:least_power_crc}
\end{equation}
Thus, a globally optimal candidate at the selected margin has the
minimum required transmit power within this certified family,
whenever that candidate problem is feasible.
\end{proposition}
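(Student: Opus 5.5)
The proof is a feasible-set nesting argument followed by an ordering argument on the selected index. First I will show monotonicity of the optimal value. Fix $\widehat{\matx H}$ and let $\mathcal F_j$ denote the feasible set of \eqref{eq:margin_bf} at margin $\lambda_j$:
\[
\mathcal F_j=\bigl\{\matx W:\ \mathrm{SINR}_k(\matx W,\widehat{\vecx h}_k)\geq\gamma_k(1+\lambda_j),\ k=1,\ldots,K\bigr\}.
\]
Since $0=\lambda_1<\cdots<\lambda_J$ and $\gamma_k>0$, we have $\gamma_k(1+\lambda_j)\leq\gamma_k(1+\lambda_{j'})$ whenever $j\leq j'$. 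Hence any $\matx W\in\mathcal F_{j'}$ also lies in $\mathcal F_j$, so $\mathcal F_{j'}\subseteq\mathcal F_j$. The objective $\|\matx W\|_F^2$ does not depend on $j$. Minimizing a fixed objective over a smaller set cannot decrease the infimum, so $P^\star(\lambda_j;\widehat{\matx H})\leq P^\star(\lambda_{j'};\widehat{\matx H})$. I will state this with infima so that the $+\infty$ convention for infeasibility is covered automatically: if $\mathcal F_{j'}=\varnothing$ the right side is $+\infty$ and the inequality is trivial. If $\mathcal F_j=\varnothing$, then $\mathcal F_{j'}=\varnothing$ too, and both sides equal $+\infty$.

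Second, I will locate $\widehat j$ relative to the certified indices. Suppose $\widehat j\leq J$. By \eqref{eq:crc_margin_selection}, $\mathcal J_{\mathrm{CRC}}\neq\varnothing$ and $\widehat j=\min\mathcal J_{\mathrm{CRC}}$. This holds because the fallback $J+1$ is returned only when $\mathcal J_{\mathrm{CRC}}$ is empty. If $J+1$ were also the minimum of a nonempty set, it would not be $\leq J$. So every regular margin $\lambda_j$ satisfying the CRC criterion has $j\in\mathcal J_{\mathrm{CRC}}$ with $j\leq J$, and therefore $\widehat j\leq j$. Applying the monotonicity from the first step with this pair of indices gives \eqref{eq:least_power_crc}.

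Finally, for the last sentence of the statement: if the problem at $\lambda_{\widehat j}$ is feasible, a globally optimal candidate attains $P^\star(\lambda_{\widehat j};\widehat{\matx H})$. By \eqref{eq:least_power_crc}, no certified regular margin can yield a smaller optimal required power.

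The main obstacle is bookkeeping rather than analysis. Two points need care. The first is that $\widehat j$ is the minimum of the acceptance set, not merely some element of it. This matters because the envelope-based set $\mathcal J_{\mathrm{CRC}}$ is upward closed, since the envelope is non-increasing; still, only minimality is used. The second is that the comparison concerns optimal values $P^\star$, which depend only on $\widehat{\matx H}$ and not on which candidate a possibly suboptimal solver returns. I will also note that attainment of the minimum is not needed for the inequality itself, only for the final interpretive sentence. If attainment must be justified, I would use the classical fixed-point/SOCP reformulation~\cite{bjornson2014}, under which a feasible problem has an optimal solution.
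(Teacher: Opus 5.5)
Your proposal is correct and follows essentially the same route as the paper: the feasible sets of \eqref{eq:margin_bf} are nested, so the optimal value is non-decreasing in $j$ (with the $+\infty$ convention for infeasibility), and because $\widehat j$ is the minimum of $\mathcal J_{\mathrm{CRC}}$, every certified regular index is at least $\widehat j$. Your extra bookkeeping on infima, attainment, and the upward-closedness of $\mathcal J_{\mathrm{CRC}}$ is accurate and adds no gap.
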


\noindent\emph{Proof.}
For $j<m$, $\gamma_k(1+\lambda_j)\leq\gamma_k(1+\lambda_m)$
for every user $k$. Hence, the feasible set at $\lambda_m$ is
contained in that at $\lambda_j$. Minimizing the same power objective
over these nested sets gives
$P^\star(\lambda_j;\widehat{\matx H})
\leq P^\star(\lambda_m;\widehat{\matx H})$,
including infeasible cases under the stated convention.
The selection rule chooses the smallest certified index whenever a
regular margin is certified. Every other certified regular margin
therefore has index at least $\widehat j$, establishing
\eqref{eq:least_power_crc}.
\hfill$\square$

\vspace*{-2mm}
\section[Numerical Results]{Numerical Results\protect\footnotemark}
\label{sec:experiments}
\footnotetext{Code repository: {\scriptsize \url{github.com/amangoel-iisc/crcbeam}}}

We compare the proposed CRCBeam method with nominal, Gaussian-ball, and conformal-ball beamforming under a common power budget.
\\

\noindent\textbf{Setup.} An $M=8$-antenna base station serves $K=3$ users at angles
$-25^\circ$, $0^\circ$, and $25^\circ$, with unit large-scale gains,
Rician factor $6$~dB, SINR targets $\gamma_k=5$~dB, and
$\sigma^2=0.1$. To model intermittent pilot
contamination, we generate~\cite{figueiredo2018pilot,carvalho2016random}
\begin{equation}
\widehat{\vecx h}_k=\vecx h_k+\mathbf n_k+
C_k\sqrt{\rho_c}\,\mathbf g_k^{(c)},
\quad C_k\sim\mathrm{Bernoulli}(p_c),
\label{eq:sim_csi}
\end{equation}
where $\mathbf n_k\sim\mathcal{CN}(\mathbf0,
\mathbf I_M/(\tau_p\rho_p))$, $\tau_p=16$, and $\rho_p=25$~dB.
The independent co-pilot channel $\mathbf g_k^{(c)}$ has Rician factor
$3$~dB and angle uniform on $[-70^\circ,70^\circ]$.
We fix $\rho_c=-7.5$~dB. Neither $p_c$ nor $\rho_c$ is provided to CRCBeam. 

Each of $20$ independent replicates uses $3000$ calibration and $400$
test samples with margin grid $\{0,0.05,\ldots,2.0\}$.
A separate $4000$ sample development set determines $P_{\rm ref}$ as the $95$th percentile of nominal required power, and we set $P_{\max}=4P_{\rm ref}$.
Gaussian-ball models only thermal noise pilot estimation error, whereas CP calibrates its
radius from observed residuals. Both use the same S-procedure-based
solver. All methods transmit only valid candidates within the budget. Reported
uncertainty bands are $95\%$ bootstrap intervals over replicate-level means. We report deployment probability $p_{\mathrm{dep}}$ in Fig.~\ref{fig:main}(a), served-outage
risk $\mathbb E_{\matx H, \widehat{\matx H}}[\setx S]$ in Fig.~\ref{fig:main}(b), and operational failure probability
$\ell_{\mathrm{op}}=1-p_{\mathrm{dep}}+\mathbb E_{\matx H, \widehat{\matx H}}[\setx S]$ in Fig.~\ref{fig:main}(c). The latter counts both non-transmission and transmission with
at least one user in outage.
\\

\noindent\textbf{Results.}
At $p_c=4\%$, CRCBeam tracks the requested risk with nearly full
availability as shown in Fig.~\ref{fig:main}. At $\alpha=1\%$, its served-outage
is $0.98\%$ and availability is $99.98\%$.
Nominal beamforming remains deployable but suffers severe true-channel
outage, illustrating the consequence of designing directly on imperfect
CSI. The two uncertainty-set baselines exhibit complementary failure modes. Gaussian-ball remains fully available but has approximately $10\%$ served-outage across stringent
targets. Thus, tightening
the reliability parameter of the assumed Gaussian error model produces
little corresponding improvement in downstream reliability once the
actual CSI errors contain intermittent contamination. The conformal-ball
design behaves oppositely. Conformal-ball achieves near-zero empirical served-outage but only
$0.19\%$ availability at $\alpha=1\%$. Consequently,
Fig.~\ref{fig:main}(c) shows why served-outage and availability must be
considered jointly: a small served-outage risk obtained by rarely transmitting does not imply low operational loss.

Table~\ref{tab:pc_sweep} examines contamination sensitivity with
independent recalibration at each $p_c$. CRCBeam maintains mean
served-outage below $1\%$ and availability above $99.9\%$ throughout,
while its mean deployed power increases with contamination. As contamination becomes more frequent, Gaussian-ball design
continues to deploy on every test state, but its realized downstream risk increases. In contrast,
the conformal-ball design maintains essentially zero empirical served-outage risk, while its availability drops significantly. Together, Fig.~\ref{fig:main} and Table~\ref{tab:pc_sweep} reveal an empirical mismatch-induced outage floor for the Gaussian design. On the other hand, {CRCBeam} instead adapts the beamforming margin to the observed downstream losses, trading additional transmit power for controlled served-outage while preserving deployability.

\begin{table}[t]
\centering
\caption{\footnotesize Contamination sensitivity at $\alpha=1\%$. Entries are
$(E_{\matx H, \widehat{\matx H}}[\setx S],p_{\rm dep})$ in percent. The final column gives
(CRCBeam, Gaussian) mean deployed power in dB relative to $P_{\rm ref}$.
Values are means over $20$ independently recalibrated replicates.}
\label{tab:pc_sweep}
\scriptsize
\setlength{\tabcolsep}{2.8pt}
\renewcommand{\arraystretch}{1.05}
\begin{tabular}{@{}ccccc@{}}
\toprule
$p_c$ & CRCBeam & Gaussian & CP & Power (dB)\\
\midrule
$0\%$ & $(0.04,100.00)$ & $(0.00,100.00)$ & $(0.00,100.00)$
 & $(-0.63,-0.41)$\\
$2\%$ & $(0.91,100.00)$ & $(4.94,100.00)$ & $(0.00,2.58)$
 & $(2.03,-0.42)$\\
$4\%$ & $(0.98,99.98)$ & $(10.10,100.00)$ & $(0.00,0.19)$
 & $(2.69,-0.44)$\\
$6\%$ & $(0.76,99.95)$ & $(14.94,100.00)$ & $(0.00,0.05)$
 & $(3.08,-0.42)$\\
\bottomrule
\end{tabular}
\end{table}

\vspace*{-2mm}
\section{Conclusion and Future Work}
\label{sec:conclusion}

We proposed CRCBeam, a power-constrained multiuser beamforming
policy that calibrates served-outage risk without constructing channel uncertainty sets. Under exchangeability, it provides finite-sample risk control without a parametric CSI-error model and selects the certified regular margin with the lowest optimal required power on the candidate grid. Numerical experiments on a synthetic channel error model based on intermittent pilot contamination show low served-outage and high availability, supported by increased transmit power. In comparison, Gaussian-ball beamforming exceeds the target risk, while conformal-ball beamforming loses availability at 
stringent targets. These results demonstrate the practical
reliability--power tradeoff enabled by direct risk calibration.
Future work will address non-exchangeable channel evolution and
larger multiuser systems.

\clearpage
\section{Acknowledgements}
This work is supported by the IISc grant and the MHRD fellowship from the Government of India. Beyond this support, the authors have no relevant financial or non-financial interests to disclose. Claude AI was used to assist in writing the manuscript and coding the simulations. The authors take full responsibility for the results of this paper.

\section{Compliance with Ethical Standards}

This is a numerical simulation study for which no ethical approval was required.
\balance
\bibliographystyle{IEEEbib}
\bibliography{references_crc}
\end{document}